\newtheorem{theorem}{Theorem}
\newtheorem{lemma}{Lemma}
\newtheorem{corollary}{Corollary}
\title{Progress on Constructing Phylogenetic Networks for Languages}
\author{Tandy Warnow, Steven N. Evans, and Luay Nakhleh }
\date{}
\begin{document}
\maketitle

\begin{abstract}
In  2006, Warnow, Evans, Ringe, and Nakhleh proposed a stochastic model (hereafter, the WERN 2006 model) of multi-state linguistic character evolution that allowed for homoplasy and borrowing. 
They proved that if there is no borrowing between languages and homoplastic states are known in advance, then the phylogenetic tree  of a set of languages is statistically identifiable under this model, and they
 presented statistically consistent methods for estimating these phylogenetic trees.
However, they  left open the question of whether a phylogenetic network -- which would explicitly model borrowing between languages that are in contact -- can be estimated under the model of character evolution.
Here, we establish that under some mild additional constraints on the WERN 2006 model, the phylogenetic network topology is statistically identifiable, and  
we present  algorithms to infer the phylogenetic network.  
We discuss the ramifications for linguistic phylogenetic network estimation in practice, and suggest directions for future research.
\end{abstract}

\maketitle

\clearpage

\clearpage

\section{Introduction}

The evolutionary history of  a collection of languages  is fundamental to many questions in historical linguistics,  
including the   reconstruction of proto-languages,  estimates of dates   for diversification of languages,  and determination of the geographical and temporal origins of Indo-Europeans \cite{gray2009language,haak2015massive,chang2015ancestry}.
These phylogenetic trees can be estimated from linguistic characters, including  morphological, typological, phonological,  and lexical characters \cite{dunn2005structural,nichols2008tutorial,calude2016typology,goldstein2020indo,goldstein2022correlated}.
There are many methods for estimating phylogenetic trees, including parsimony criteria, distance-based methods, and likelihood-based techniques based on parametric models of trait evolution, 
and 
the relative strengths of these methods and how they depend on the properties of the data have been explored using both real-world and simulated datasets 
\cite{ringe2002indo,rexova2003cladistic,McMahon-McMahon-2006,barbancon-diachronica2013}.

Yet it is well known that languages do not always evolve purely via descent, with ``borrowing" between languages requiring an extension of the Stammbaum model to a model that explicitly acknowledges  exchange between languages \cite{nakhleh2005perfect,atkinson2005words,boc2010classification,nelson2011networks,skelton2015borrowing}. 
One graphical model that has been used explicitly for  language evolution  is  
composed of an underlying genetic tree on top of which there are  additional contact edges allowing for borrowing between communities that are in contact  \cite{nakhleh2005perfect,boc2010classification}.
This type of graphical model has been studied in the computational phylogenetics literature, where it is referred to as a ``tree-based phylogenetic network"  \cite{francis2015phylogenetic}.
The estimation of phylogenetic networks is very challenging, both for statistical reasons (i.e., potential non-identifiability) and computational reasons  (see discussion in \cite{cao2019empirical}); 
although   tree-based phylogenetic networks are a restricted subclass of phylogenetic networks, there are still substantial challenges in estimating these phylogenetic networks, as discussed in \cite{gambette2012quartets,keijsper2014reconstructing}.

As difficult as it is to estimate a tree-based phylogenetic network, the estimation of  a dialect continuum represents an even larger challenge, and the interpretation of a dialect continuum is also difficult \cite{nichols2008tutorial,jacques2019save}. However, at least for language families such as Indo-European,  tree-based phylogenetic networks may suffice \cite{nakhleh2005perfect}, and hence are the focus of this paper.

The inference of phylogenetic networks  depends on the graphical model (i.e., tree, tree-based phylogenetic network, etc.) and also on the stochastic model of character evolution.
Examples of relevant character evolution models include the Stochastic Dollo with Lateral Transfer model in \cite{kelly2017lateral}, which models presence/absence of cognate classes (i.e., binary characters) with borrowing, and a model for multi-state character evolution in \cite{warnow2006stochastic}, which also allows for borrowing. 
When the phylogenetic network is tree-based, we
may seek to estimate just the genetic tree (i.e., the tree in the tree-based phylogenetic network) or we can seek to estimate the entire topology of the phylogenetic network itself, which would include the location of the contact edges.  

In this study, we address the challenge of estimating  the phylogenetic network  topology under an extension  of the model proposed in \cite{warnow2006stochastic}, which we will refer to as  the  WERN 2006 model to acknowledge the four authors of the model (Warnow, Evans, Ringe, and Nakhleh).
In the WERN 2006 model, the graphical model is a tree-based phylogenetic network so that the underlying genetic tree is rooted and binary and the non-tree edges represent contact between language groups and are bidirectional. 
Characters can evolve  down the underlying genetic tree or can use one or more contact edges. 
However, if a character evolves using a contact edge so that a state is borrowed into a lineage via that contact edge, then the borrowed state replaces the state already in the lineage. Thus, 
every character evolves down some rooted tree contained within the rooted network.
The WERN 2006 model includes numeric parameters that govern the probability of change, and these  parameters depend on the type of character, which may be phonological, morphological, or lexical.  
While the phonological characters have two states, $0$ and $1$, indicating presence-absence of a sound change and $0$ indicating the ancestral state, the other characters can exhibit any number of states on the languages, and so are called ``multi-state" characters. 
The WERN 2006 model allows for homoplasy in  character evolution (i.e., parallel evolution or back-mutation, see Figure \ref{fig:luay-figure-2}), provided that the homoplastic character states are known (in other words, we know which  character states can arise as a result of either parallel evolution or back-mutation).

\begin{figure}[ht]
\centering 
 \includegraphics[height=6cm]{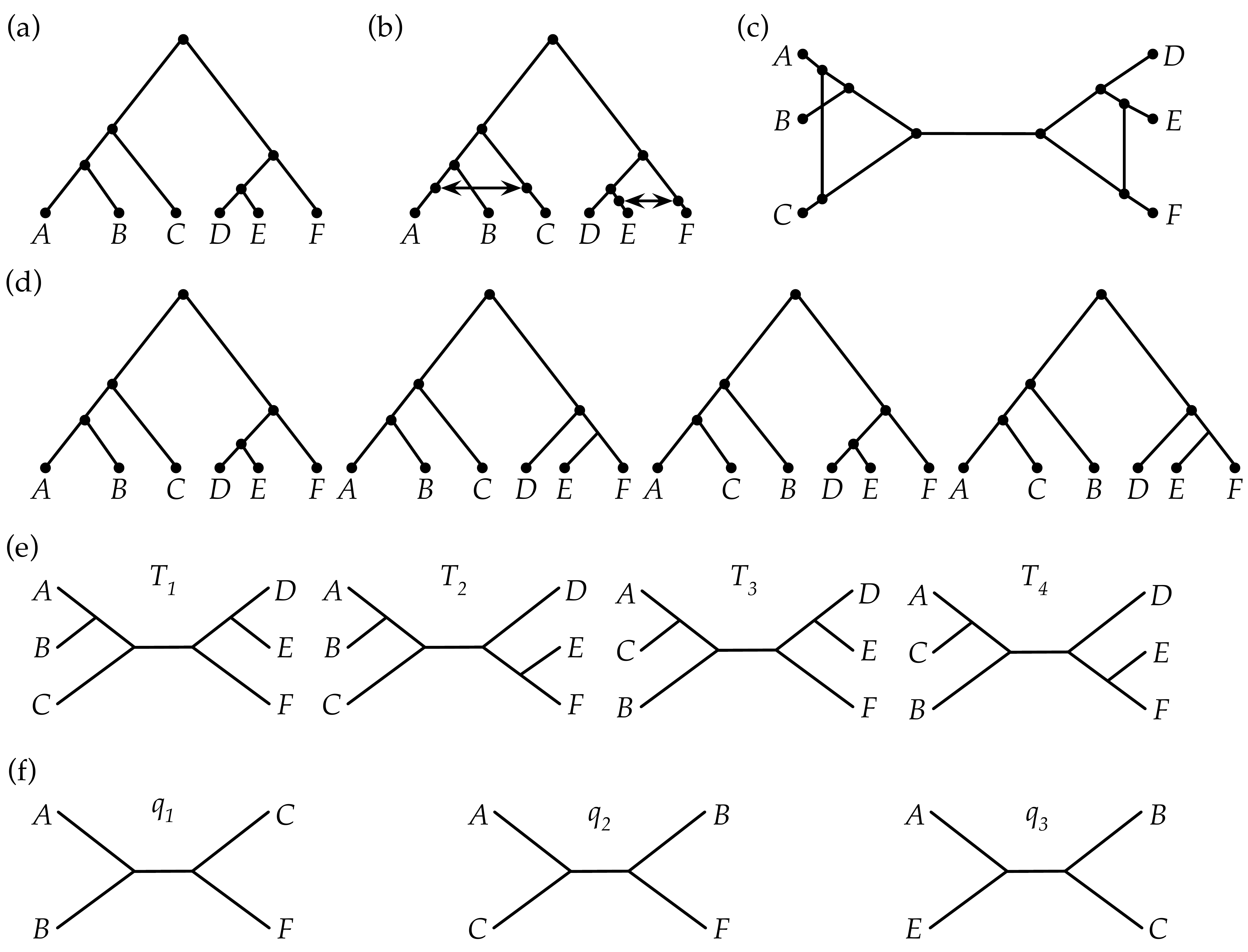}
\caption{Panel (a) shows a genetic tree with leafset $\{A,B,C,D,E,F\}$. Panel (b) shows the tree-based phylogenetic network formed by adding two contact edges   to the genetic tree. Panel (c) shows an  unrooted version of the rooted network in (b). Panel (d) shows all four  rooted trees contained inside the rooted network from (b), with the first being the genetic tree from (a). Panel (e) shows the unrooted versions of the rooted trees in (d). Panel (f)   shows three quartet trees; $q_1$ is displayed in $T_1$ and $T_2$ but not in $T_3$ or $T_4$, $q_2$ is displayed in $T_3$ and $T_4$ but not in trees $T_1$ or $T_2$, and $q_3$ is not displayed in any of these trees. Because $q_1$ and $q_2$ are each displayed by at least one tree in the network, the set $Q(N_r)$ 
will contain both $q_1$ and $q_2$, but will not contain $q_3$.}
\label{fig:luay-figure}
\end{figure}

 \begin{figure}[ht]
\centering 
 \includegraphics[height=3cm]{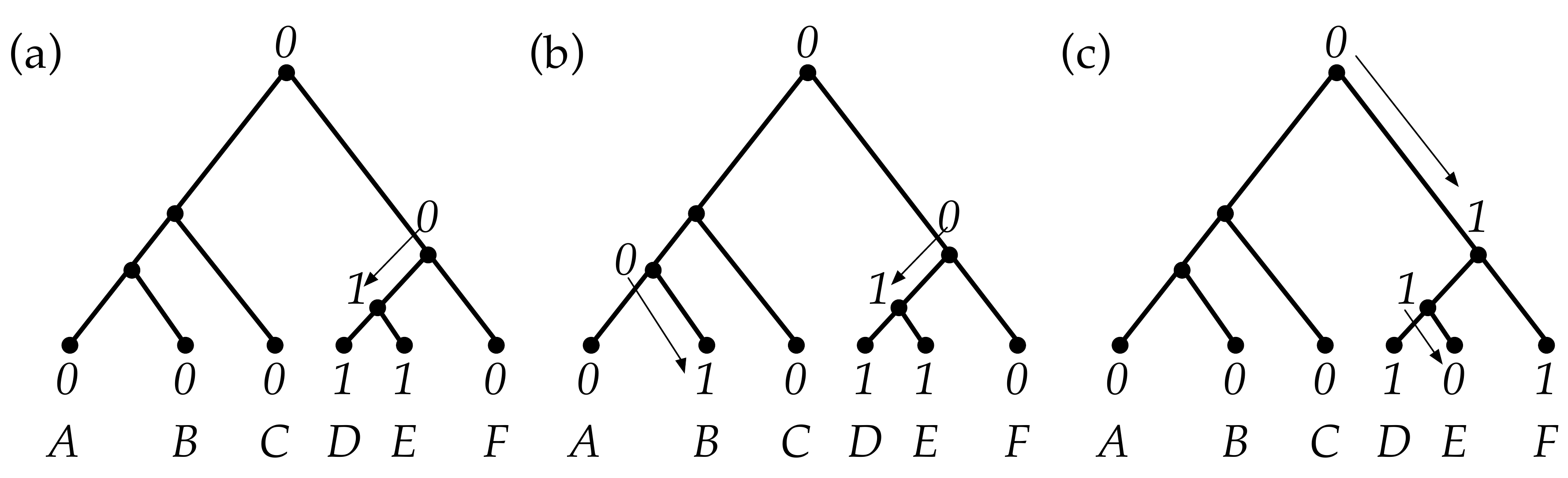}
\caption{Character evolution on a rooted tree. Panel (a) shows evolution without any homoplasy, panel (b) shows homoplasy due to parallel evolution (i.e., two $0\rightarrow 1$ transitions), and panel (c) shows homoplasy due to back-mutation (note the $1\rightarrow 0$ transition, where $0$ is the ancestral state).}
\label{fig:luay-figure-2}
\end{figure} 

Our WERN 2023 model modifies the  WERN 2006 model  as follows.
First, under the WERN 2023  model, we allow for any number of homoplastic states, as long as these states are known in advance.
We require that   the probability of homoplasy for the root state be strictly less than $1$ for all non-binary characters. 
We also allow for some characters to not exhibit any homoplasy, but the probability of a character being homoplasy-free is a parameter that can be any value $x$ with $0 \leq x \leq 1$.  
The special case where $0 < x$ means that the probability of a random character being homoplasy-free is strictly positive; when this special case holds, we will be able to use this information fruitfully.

In this article, we show  that we can estimate the unrooted topology of any WERN 2023 model phylogenetic network in a statistically consistent manner, provided that the cycles in the phylogenetic network are vertex-disjoint (which will ensure that the phylogenetic 
 network is level-1  \cite{choy2004computing,gusfield2004optimal}) and each cycle contains at least six vertices. 
The key to constructing these unrooted topologies is the inference of the unrooted quartet trees displayed by trees contained within the phylogenetic network, and these can be easily constructed from the fact that we have identifiable homoplasy.
Finally,  we also show that if homoplasy-free characters have positive probability, then we can identify the rooted topology of such a phylogenetic network.


The rest of the article is organized as follows. 
In Section \ref{sec:foundations}, we give a high-level description of the new model we propose, followed by an algorithm for   estimating the unrooted phylogenetic network and in Section \ref{sec:root-network} we present an algorithm for rooting that unrooted topology.  
We state the theoretical guarantees for the algorithms, but leave the proofs   in the appendix. 
In Section \ref{sec:practical} we discuss  the implications for the theoretical results we provide and the issues when trying  to estimate these phylogenetic networks in practice.
We conclude in Section \ref{sec:future} with a discussion of future work.

\section{Mathematical Foundations}
\label{sec:foundations}

This section introduces the basic mathematical concepts and results, but we direct the interested reader to \cite{warnow2017computational} and \cite{gusfield2014recombinatorics} for additional context. 

\subsection{Basic terminology}

The tree-based rooted phylogenetic networks $N$ we consider are formed by taking a rooted binary tree $T$ (with root $r$) and adding edges to the tree (see Figure \ref{fig:luay-figure}) so that no two cycles share any vertices. 
The edges within the rooted tree are directed away from the root towards the leaves, but the additional edges represent borrowing and so are bi-directional.
\textcolor{black}{
Each cycle in the rooted network thus has one bidirectional edge and the two nodes at the endpoint of this edge are referred to as the ``bottom nodes'' of the cycle. Note that the bottom nodes have indegree two and all other nodes have indegree one.}
To ensure identifiability, throughout this article we will constrain the phylogenetic network topology so that the smallest cycle in the unrooted network has at least six vertices.

We let $\mathcal{L}$ denote the set of languages for which we wish to construct the true phylogenetic network, $N$.
We use linguistic characters  to estimate this network,  and let $\alpha(L)$ denote the state of language $L$ for character $\alpha$.
Recall that  we say that a character exhibits {\em homoplasy} on a tree $T$ if it is not possible to assign labels to the internal vertices so that the character evolves without back mutation or parallel evolution (Figure \ref{fig:luay-figure-2}).
Furthermore, every rooted network defines a set of   rooted trees (Figure \ref{fig:luay-figure}) and every character evolves down one of the trees within the network.  
We say that a character evolves without homoplasy on a network if it is homoplasy-free on at least one of the trees inside the network; conversely,  a character exhibits homoplasy on a phylogenetic network if it exhibits homoplasy on every tree within the network.


We can also consider the rooted trees in the network  as unrooted trees, in which case they can be used to define quartet trees.
Thus, we will say that the unrooted tree $T$ displays a quartet tree 
$uv|xy$ if  $T$  has an edge $e$ that separates leaves $u,v$ from leaves $x,y$ (see Figure \ref{fig:luay-figure}). 
The set of all quartet trees displayed by any tree contained inside the network $N$ is referred to as $Q(N_r)$.
\textcolor{black}{
Note that this is \textit{not the same} as the set of all quartet trees displayed by the unrooted version of the network $N$, which is denoted by $Q(N)$ in \cite{gambette2012quartets}.}

\section{Constructing the unrooted network topology}
\label{sec:construct-unrooted-topology}

In this paper, the phylogenetic network $N$ consists of an underlying genetic tree on top of which there are borrowing edges, the cycles that are created have at least six vertices and are vertex disjoint, and
we assume that the characters  evolve down $N$   under the WERN 2023 model.
Here we describe a method that is based on computing quartet trees for constructing the unrooted topology of the phylogenetic network.

\subsection{{\em Quartet Tree-Calculator} ({\em QTC}): Constructing $Q(N_r)$}
We begin with a description of the {\em QTC} method ({\em Quartet Tree Calculator}) for computing quartet trees.
Recall that we assume we know which of the states are homoplastic.
Let $\alpha$ be a character and assume states $1$ and $2$ are both non-homoplastic.  Now suppose that we have four languages $a,b,c,d$ such that  $\alpha(a)=\alpha(b)=1$ and $ \alpha(c) = \alpha(d)=2$.
Then, we add quartet tree $ab|cd$ to our estimate  of $Q(N_r)$.
We compute these quartet trees for every character $\alpha$ in turn, thus defining a set of quartet trees that we will refer to as $Q$, the output of {\em QTC}.

\begin{theorem}
Let $N$ be a rooted phylogenetic network, and let characters evolve down $N$ under the WERN 2023 model, and let $Q$ be the output of {\em QTC}.
Then every quartet tree in $Q$ will be in $Q(N_r)$.
Furthermore, 
 as the number of characters increase, with probability converging to $1$, every quartet tree in $Q(N_r)$ will appear in   $Q$.
Thus, {\em QTC} is a statistically consistent estimator of $Q(N_r)$.
\label{thm:QTC-consistent}
\end{theorem}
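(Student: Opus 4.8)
The plan is to prove the two halves of the theorem separately: soundness (every quartet in $Q$ lies in $Q(N_r)$) and completeness in the limit (every quartet in $Q(N_r)$ eventually appears in $Q$ with high probability).

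\textbf{Soundness.} Fix a character $\alpha$ and suppose {\em QTC} outputs $ab|cd$ because states $1,2$ are non-homoplastic and $\alpha(a)=\alpha(b)=1$, $\alpha(c)=\alpha(d)=2$. By the WERN 2023 model, $\alpha$ evolves down some rooted tree $T$ contained in $N$. First I would argue that, on the unrooted version of $T$, the edge set labeled $1$ (the vertices assigned state $1$, including internal ones in the homoplasy-free history) forms a connected subtree, and likewise for state $2$, and these two subtrees are vertex-disjoint because $1\neq 2$. Hence there is an edge of $T$ separating $\{a,b\}$ from $\{c,d\}$, so $ab|cd$ is displayed by $T$, and therefore $ab|cd\in Q(N_r)$. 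The one subtlety is that state $1$ or $2$ could be the root state: back-mutation to the root state is still homoplasy, so "state $1$ is non-homoplastic" rules this out for state $1$; if state $2$ is the root state we still get a connected monochromatic region, so the argument goes through. I would spell this connectedness claim out as a short lemma about homoplasy-free characters (states partition the tree into connected clades).

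\textbf{Completeness.} Fix any $q = ab|cd \in Q(N_r)$. By definition there is a rooted tree $T^\ast$ inside $N$ whose unrooted form displays $q$. I would then exhibit, for a single character, a positive-probability event under the WERN 2023 model that causes {\em QTC} to emit exactly $ab|cd$: namely, the character is drawn to evolve down $T^\ast$ (each tree in the network has positive probability, since contact-edge usage has positive probability and is finite in number), the character is non-binary with at least two non-homoplastic states available, and it evolves homoplasy-free on $T^\ast$ in such a way that one non-homoplastic state labels the component of $T^\ast - e$ containing $a,b$ and a different non-homoplastic state labels the component containing $c,d$, where $e$ is the separating edge. Here I would lean on the model constraint that the probability of homoplasy for the root state is strictly less than $1$ for non-binary characters, which guarantees a positive-probability homoplasy-free history; combined with the positive probability of the relevant substitutions along $e$, the whole event has some probability $p_q > 0$ depending only on $N$, the model parameters, and $q$. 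Since characters are i.i.d. (or at least independent) draws, the probability that {\em no} character among the first $n$ triggers $q$ is at most $(1-p_q)^n \to 0$. Taking a union bound over the finite set $Q(N_r)$, with probability $\to 1$ every quartet in $Q(N_r)$ appears in $Q$; together with soundness this gives $Q = Q(N_r)$ eventually with high probability, i.e., statistical consistency.

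\textbf{Main obstacle.} The routine part is the union bound and the i.i.d. argument; the delicate part is pinning down $p_q > 0$, i.e., verifying that the WERN 2023 model genuinely assigns positive probability to a homoplasy-free history on $T^\ast$ that realizes the desired bipartition with two distinct non-homoplastic states on the two sides of $e$. This requires carefully reading off from the model which parameters must be positive (probability the character is multi-state, probability it routes down $T^\ast$, probability it is homoplasy-free, and the substitution probabilities producing the needed state pattern across $e$) and checking that the stated constraints — homoplasy probability of the root state strictly below $1$, and at least some characters being multi-state — are exactly what is needed. I would also need to confirm that "non-homoplastic state" is a property of the model (a designated label set) rather than something that could fail to exist; the excerpt says homoplastic states are known in advance and, for non-binary characters, the root state has homoplasy probability $<1$, which I would use to assert at least two non-homoplastic states are usable. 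If the model as defined only guarantees one non-homoplastic state, the completeness direction would need an extra hypothesis, so checking this is the crux of the proof.
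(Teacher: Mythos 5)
Your proof is correct and follows essentially the same route as the paper's: soundness via the convexity (connectedness) of the non-homoplastic states on the tree the character evolved down, and completeness via a positive-probability event on a tree $T^\ast$ displaying the quartet, followed by a union bound over the finite set $Q(N_r)$. The crux you flag --- that the model genuinely assigns positive probability to a history with two distinct non-homoplastic states across the separating edge --- is handled in the paper by the same (essentially asserted) claim that a character can, with positive probability, take a non-homoplastic state at the root and change exactly once, on an edge above $lca_{T}(a,b)$, to another non-homoplastic state.
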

The proof of this theorem is given in the appendix.


\subsection{{\em Quartet-Based Topology Estimator}}

We now present  {\em  QBTE} ({\em Quartet-Based Topology Estimator}), our method for constructing an unrooted network topology, using the quartet trees calculated using {\em QTC}.

By Theorem \ref{thm:QTC-consistent},   {\em QTC}   will return $Q(N_r)$ with probability going to $1$ as the number of characters increases. 
Hence, to estimate the unrooted topology of a phylogenetic network $N$, it suffices to use a method that can take unrooted quartet trees as input,
provided that it is guaranteed to return the unrooted topology of $N$ when given $Q(N_r)$.

\textcolor{black}{
A natural candidate is the algorithm from Section 7.1 of
\cite{gambette2012quartets}, which correctly constructs the unrooted topology of level-1 networks $N$ given $Q(N)$, and does so in $O(n^4)$ time, where $n$ is the number of leaves in the network $N$.
However, $Q(N_r)$ is in general a proper subset of $Q(N)$, and so we cannot use this algorithm as is.
Therefore, we have modified the algorithm, as we describe in the next section.}

\subsubsection{Gambette et al.'s algorithm}
\textcolor{black}{
The algorithm from \cite{gambette2012quartets}, which we refer to as ``Gambette's algorithm", makes the assumption that the input is all of $Q(N)$, where $N$ is a level-1 network.
Recall that $Q(N)$ is all quartet trees displayed by the unrooted network $N$, and so contains more than just $Q(N_r)$. Here we briefly describe the approach, see \cite{gambette2012quartets} for full details.}

\textcolor{black}{
The algorithm  has two steps.
In Step 1, an unrooted tree, referred to as the ``SN-tree'', is computed from the quartet trees. The SN-tree is the maximally resolved tree that has the property that all quartet trees in the SN-tree are in the input set of quartet trees, and no resolved quartet tree in the SN-tree conflicts with an input quartet tree.  
The SN-tree is unique for a given dense set of quartet trees (i.e., a set that has at least one tree for every four leaves), and can be constructed in polynomial time.}

\textcolor{black}{
In Step 2, the polytomies in the SN-tree are used to construct cycles. 
A polytomy with degree $k$ will produce a cycle of $k$ nodes. Moreover, each neighbor of the polytomy defines a set of leaves in the SN-tree, and any node in that set can be used to label the corresponding node in the cycle.
Gambette et al.~pick one such leaf for every node in the cycle, and thus will construct a cycle (to replace the polytomy) with labels drawn from the leaves.  They then use the quartet trees to determine the order of the labels of the nodes in the cycle (by determining which pairs of labels are adjacent). This allows them to replace each polytomy by a cycle in the correct way, and hook up the cycles to the rest of the graph.}

\textcolor{black}{
Here Gambette's algorithm explicitly assumes that all quartets in $Q(N)$ are in the input, and without this assumption the algorithm can definitely fail to be correct.}

\subsubsection{Our modification to Gambette's algorithm}
\label{sec:modified-gambette}
\textcolor{black}{
Although we can use Step 1 without modification (because it only requires that the input set of quartet trees be dense), we cannot use Step 2,  since  $Q(N_r)$ is not all of $Q(N)$ in general.  
However, we can easily modify this algorithm so that it works correctly with just the quartets from $Q(N_r)$, provided that the minimum cycle length of any cycle is six.}

\textcolor{black}{
The proof goes as follows: if  some set of four leaves $a,b,c,d$ (that are labelling different nodes for a single cycle) has only one tree, then (by Lemma \ref{lemma:main}, below) none of the bottom nodes in the cycle can be labelled by any of these leaves.  Therefore, if every cycle has at least six nodes, then we can determine which of the nodes are the bottom nodes, since every other node is in a set of four leaves that has only one tree.
As a side remark, we note that there are examples of 5-cycles where we cannot figure out what the two bottom nodes are!}

\textcolor{black}{
Furthermore, once we know the bottom nodes in the cycle, we can remove those two bottom nodes, and since the cycle has length at least six, there are at least four remaining nodes.
The set of quartet trees on the remaining nodes are compatible and define a tree, and in fact define a path.  From this path, the correct way of adding in the bottom nodes (one at each end of the path) is trivial. 
This allows us to completely determine the expansion of the polytomy into a cycle, with a unique bottom pair and bidirectional edge.}

\begin{lemma}
\textcolor{black}{
Let $a,b,c,d$ be a set of four leaves that are the labels for distinct nodes in a common cycle. 
Then $Q(N_r)$ has at exactly two quartet trees on $a,b,c,d$ if and only if 
 at least one of these leaves labels a bottom node in the cycle.
 }
 \label{lemma:main}
\end{lemma}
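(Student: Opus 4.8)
To prove the lemma I would reduce it to a purely combinatorial statement about the ways a single cycle of $N$ can be ``cut open'' by the rooted trees it contains. Fix the cycle $C$ on which $a,b,c,d$ lie, write its vertices in cyclic order as $c_1,\dots,c_k$ with $c_1$ the top node and $c_p,c_{p+1}$ the two bottom nodes (the endpoints of the unique bidirectional contact edge of $C$), and let $S_i$ be the subtree pendant at $c_i$, so each of $a,b,c,d$ is a leaf of a distinct $S_i$. The first step is to pin down $Q(N_r)$ restricted to such a quadruple. Since each bottom node has indegree two, a rooted tree contained in $N$ is obtained, for each cycle, by deleting one incoming edge at each of its two bottom nodes while keeping the reticulate part attached to the root. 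For $C$ the incoming edges of $c_p$ are the tree edge from $c_{p-1}$ and the contact edge, and those of $c_{p+1}$ are the tree edge from $c_{p+2}$ and the contact edge; of the four combinations exactly three do not disconnect the cycle, and in each of them the surviving part of $C$ is $C$ with exactly one of the three edges $e_{0}=\{c_p,c_{p+1}\}$ (the contact edge itself, modelling a character that does not borrow along $C$), $e_{+}=\{c_{p+1},c_{p+2}\}$, and $e_{-}=\{c_{p-1},c_p\}$ (the two borrowing directions) removed. Deleting such an edge and suppressing the resulting degree-two vertices turns $C$ together with its pendant subtrees into a caterpillar whose leaves occur along the backbone in the cyclic order $c_1,\dots,c_k$ cut open at the deleted edge; since the remaining cycles and the internal structure of the $S_i$ cannot affect a quartet on four leaves lying in four distinct $S_i$, it follows that $Q(N_r)$ restricted to $\{a,b,c,d\}$ is exactly the set of quartets produced by cutting $C$ at $e_{-}$, $e_{0}$, or $e_{+}$.

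Next I would record the elementary caterpillar fact. If the four chosen vertices sit at cyclic positions $i_1<i_2<i_3<i_4$, then the quartet induced on $\{a,b,c,d\}$ by the cut at an edge $e$ depends only on which of the four arcs $(i_1,i_2),(i_2,i_3),(i_3,i_4),(i_4,i_1)$ contains $e$: cuts in $(i_2,i_3)$ or $(i_4,i_1)$ give the quartet $i_1i_2|i_3i_4$, cuts in $(i_1,i_2)$ or $(i_3,i_4)$ give $i_2i_3|i_1i_4$, and the ``crossing'' quartet $i_1i_3|i_2i_4$ never occurs. Hence the arcs carry only two quartet ``colours'', and these colours alternate around $C$, so $Q(N_r)$ has exactly two quartets on $\{a,b,c,d\}$ precisely when the three cuts $e_{-},e_{0},e_{+}$ do not all land in arcs of a single colour. (One also has to check the degenerate case where a cut edge is incident to a chosen vertex, so that two of $a,b,c,d$ form a cherry at an end of the caterpillar; this is consistent with the arc rule, so it needs no separate treatment.)

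The lemma then falls out of a short case analysis on $\{p,p+1\}\cap\{i_1,i_2,i_3,i_4\}$, that is, on how many of $a,b,c,d$ label a bottom node. If this intersection is empty, then because $p$ and $p+1$ are consecutive integers no chosen position separates them, so the bottom nodes and all three cut edges $e_{-},e_{0},e_{+}$ lie in the single arc containing $p$ and $p+1$; the three cuts then give the same quartet and $|Q(N_r)|=1$ on $\{a,b,c,d\}$. If the intersection is a singleton, say $p=i_m$ (the case $p+1=i_m$ is symmetric), then $e_{-}$ lies in the arc just before $i_m$ while $e_{0}$ and $e_{+}$ lie in the arc just after $i_m$; these arcs are adjacent and hence of opposite colour, so there are exactly two quartets. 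If the intersection is all of $\{p,p+1\}$, then $p=i_m$ and $p+1=i_{m+1}$ are consecutive among the chosen positions and $e_{-},e_{0},e_{+}$ fall into the three consecutive arcs $(i_{m-1},i_m)$, $(i_m,i_{m+1})$, $(i_{m+1},i_{m+2})$, whose colours are not all equal, so again there are exactly two quartets. Putting the three cases together gives exactly the stated equivalence.

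The step I expect to be the real obstacle is the first one: justifying, from the WERN 2023 semantics of a bidirectional contact edge, that the rooted trees contained in $N$ resolve a cycle only by cutting $e_{-}$, $e_{0}$, or $e_{+}$, and never any other cycle edge. This is what couples the combinatorics to the stochastic model, and it forces one to be careful about degree-two suppression and about which direction(s) a contact edge may be used; everything afterward is the bookkeeping sketched above. It is worth noting that the six-vertex hypothesis is not needed for the lemma itself (the lemma holds for any cycle with at least four vertices); it enters only when one wants to turn the criterion into an algorithm, since detecting the bottom nodes requires four cycle vertices that avoid both of them, as in the discussion preceding the lemma. A minor nuisance is carrying the cyclic wrap-around correctly when the top node $c_1$ is one of $a,b,c,d$ or lies between two of them, which is handled by doing all position arithmetic modulo $k$.
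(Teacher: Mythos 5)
Your proof is correct, and at its core it runs the same case analysis as the paper's (split on whether zero, one, or two of $a,b,c,d$ label bottom nodes), but you get there by a more systematic route. The paper argues directly: it writes down the quartet ``without the bidirectional edge'' and the quartet after letting one bottom node inherit across it, and asserts (e.g.\ ``the quartet tree is $ab|cd$ \dots\ now the quartet tree is $ad|bc$'') that these differ. You instead first reduce the whole question to a clean combinatorial statement --- each tree contained in $N$ resolves the cycle by deleting exactly one of the three consecutive edges $e_-,e_0,e_+$ surrounding the bottom pair, and the resulting quartet depends only on which of the four arcs between chosen positions that deleted edge lies in, with the arcs alternating between the two non-crossing quartet topologies. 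This buys you two things the paper elides: (i) in the one-bottom-node case the paper compares only two of the three trees (it never checks the tree in which the \emph{other} bottom node borrows, which your argument shows lands in the same arc as the genetic tree's cut and so contributes no third topology), and (ii) the fact that the quartet actually \emph{changes} between the cuts on opposite sides of a chosen bottom position is proved via the alternating colouring rather than asserted. Your flagged ``real obstacle'' --- that the WERN semantics permit only the three cuts $e_-,e_0,e_+$ --- is indeed the one modelling fact the argument rests on, and it follows from the paper's stipulation that a borrowed state replaces the lineage's own state (so each bottom node keeps exactly one incoming edge, and using the contact edge in both directions simultaneously would disconnect the cycle from the root); your observation that vertex-disjointness of cycles pins each of $a,b,c,d$ to a fixed pendant component of $C$ closes the remaining loophole. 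Your side remarks (the lemma needs only four cycle vertices; six is needed only so the algorithm can find a witness quadruple avoiding both bottom nodes) also match the paper's discussion.
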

\begin{proof}
\textcolor{black}{
The proof is by contradiction.
If exactly one of the leaves labels a bottom node
in the cycle, then without loss of generality assume that  $d$ labels a bottom node and has parent node $z$ in the tree where $z$ is not the other bottom node.  
If we do not include the bidirectional edge, then without loss of generality, the quartet tree is $ab|cd$, and the path in the quartet tree to $c$ from $d$  goes through $z$.
Now, consider the result of using the bidirectional edge by allowing $d$ to inherit from the other bottom node, and so deleting the edge $(z,d)$. Now the quartet tree on $a,b,c,d$ is
$ad|bc$.} 

\textcolor{black}{
If two of the leaves label bottom nodes, then without loss of generality the pair of bottom nodes is $c,d$.
If we do not use the bidirectional edge, then the quartet tree splits $c$ and $d$ and so we obtain $ac|bd$.
If we use the bidirectional edge in either direction, then we obtain $ab|cd$.
Hence, if at least one of the two leaves maps to a bottom node, then we obtain two quartet trees for the set $a,b,c,d$.
}

\textcolor{black}{
Now for the other direction. If the four leaves are $a,b,c,d$ and none of them label bottom nodes, then the rooted subtree on $a,b,c,d$ does not go through the bottom nodes, and so is not impacted by including the bidirectional edge. Hence, there is only one rooted quartet tree on $a,b,c,d$, and so only one unrooted quartet tree on $a,b,c,d$.
}
\end{proof}


\subsubsection{ {\em QBTE}:  constructing the unrooted network topology}
\begin{itemize}
\item Construct a set of quartet trees $Q$ from the input $M$ character dataset, using the {\em QTC} method.
\item \textcolor{black}{Use our modification to the algorithm from \cite{gambette2012quartets} applied to $Q$ to produce the level-1 phylogenetic network for the quartet trees (see Section \ref{sec:modified-gambette}).}
\end{itemize}

\begin{theorem}
The {\em QBTE (Quartet-based topology estimation)} method is statistically consistent for estimating the unrooted topology of the network $N$ under the WERN 2023 model when the rooted network  $N$ is a level-1 network where all cycles have length at least six; furthermore, {\em QBTE} runs in polynomial
time.
\label{thm:qbte-consistent}
\end{theorem}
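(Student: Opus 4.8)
The plan is to split the claim into its statistical part and its combinatorial part. By Theorem~\ref{thm:QTC-consistent}, as the number of characters grows the output $Q$ of \emph{QTC} equals $Q(N_r)$ with probability tending to $1$; so it suffices to prove the purely combinatorial statement that \emph{our modified Gambette algorithm, run on the exact input $Q(N_r)$, returns the unrooted topology of $N$}. Granting this, on the probability-$\to 1$ event $\{Q = Q(N_r)\}$ the method \emph{QBTE} outputs the correct topology, which is statistical consistency; the running-time bound is then a matter of bookkeeping.

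Before analysing the two steps I would record that $Q(N_r)$ is \emph{dense}: for any four leaves, the underlying genetic tree is one of the trees inside $N$ and it displays some quartet on those leaves, so that quartet lies in $Q(N_r)$. Density is precisely what Step~1 (the SN-tree construction) and our Step~2 need from their input.

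For Step~1, Gambette et al.\ show that the SN-tree of $Q(N)$ is the tree obtained from $N$ by contracting each cycle to a polytomy (the ``blob tree''), so it suffices to prove that the SN-tree of $Q(N_r)$ is this same blob tree. One inclusion is easy: every quartet of the blob tree is realised by a cut edge of $N$, such a cut edge survives in every tree inside $N$, and therefore the corresponding quartet lies in $Q(N_r)$; since moreover $Q(N_r)\subseteq Q(N)$ no quartet of the blob tree conflicts with $Q(N_r)$, so the blob tree is a candidate and the SN-tree of $Q(N_r)$ is \emph{at least} as resolved. The other inclusion --- that it is not \emph{more} resolved --- is where the real work lies: one must show that no candidate internal split of a cycle can be added, by exhibiting, for each such split, a four-leaf subset of the cycle whose two quartets in $Q(N_r)$ conflict with it. Here I would use Lemma~\ref{lemma:main} together with the fact that a cycle has at least six nodes, hence at least four non-bottom nodes, so that there are always enough leaves available to form the conflicting four-leaf subset.

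For Step~2, the argument sketched in Section~\ref{sec:modified-gambette} assembles into a proof: given a polytomy, which corresponds to a cycle of length $k\ge 6$, Lemma~\ref{lemma:main} identifies a node as a bottom node precisely when every four-leaf subset containing its label carries two quartets of $Q(N_r)$; since $k\ge 6$ each non-bottom node lies in a four-leaf subset of non-bottom nodes that carries a single quartet, whereas a bottom node lies in no such subset, so the two bottom nodes are determined exactly. Removing them leaves at least four nodes that lie on a common path in every tree inside $N$, so the quartets of $Q(N_r)$ among them are pairwise compatible and reconstruct that path, fixing the cyclic order of the non-bottom nodes; re-attaching the two bottom nodes, necessarily one at each end of the path, and joining them by the bidirectional edge recovers the cycle uniquely, and re-inserting every expanded cycle at its already-correct cut edges yields the unrooted topology of $N$. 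Finally, \emph{QTC} runs in $O(M n^4)$ time, $|Q| = O(n^4)$, Gambette's Step~1 runs in $O(n^4)$ time, and our Step~2 inspects $O(n^4)$ four-leaf subsets and reconstructs one path per cycle, so \emph{QBTE} runs in time polynomial in $n$ and $M$. I expect the main obstacle to be the ``not more resolved'' half of the Step~1 analysis: the blob tree is easily a lower bound for the SN-tree of $Q(N_r)$, but ruling out further resolution is delicate precisely because some four-leaf subsets inside a long cycle carry only a single quartet of $Q(N_r)$, so the conflicting subset that blocks a given would-be resolution has to be chosen with care.
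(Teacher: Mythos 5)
Your proposal follows essentially the same route as the paper: invoke Theorem~\ref{thm:QTC-consistent} to reduce to the purely combinatorial claim that the modified Gambette algorithm reconstructs $N$ from the exact input $Q(N_r)$, then justify the cycle-expansion step via Lemma~\ref{lemma:main} together with the assumption that every cycle has at least six nodes, and finish with a routine running-time count. The one place you go beyond the paper is Step~1: the paper simply asserts that the SN-tree construction can be reused because $Q(N_r)$ is dense, whereas you correctly observe that one must also check that the SN-tree of the \emph{smaller} set $Q(N_r)$ is not more resolved than the blob tree --- a check that does go through via Lemma~\ref{lemma:main} (any four leaves on a cycle, at least one labelling a bottom node, carry two mutually conflicting quartets of $Q(N_r)$, which blocks any refinement of the corresponding polytomy), but which the paper leaves implicit.
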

The proof is provided in the appendix.


\section{{\em Root-Network}: Rooting an unrooted level-1 network} 

\label{sec:root-network}

Here we present {\em Root-Network}, a method for rooting an unrooted level-1 phylogenetic network.   
Thus, the input to {\em Root-Network} will be the unrooted network $N$ and the set $\mathcal{C}_0$ of homoplasy-free phonological characters that exhibit both states $0$ and $1$ at the leaves of $N$. 
If $\mathcal{C}_0$ is empty, we mark every edge as being able to include the root, and otherwise we will process the edges to determine which edges are feasible as root locations.
At the end of processing all the homoplasy-free phonological characters, 
any edge that remains is considered a feasible root location.


When an edge $e=(a,b)$ is used as the root location, it is subdivided through the introduction of a new vertex $v_e$ so that the edge $(a,b$) is replaced by a path of length two containing two edges: $(a,v_e)$ and $(v_e,b)$. 
The vertex $v_e$ is then the root of the tree that is produced. 
Since these characters in $\mathcal{C}_0$ exhibit both states  and because $0$ is the ancestral state, making $e$ contain the root is equivalent to saying that the state of $v_e$ is $0$ for every character in $\mathcal{C}_0$.
Hence, determining if $v_e$ can be the root for a given character $\alpha \in \mathcal{C}_0$   is equivalent to saying that $v_e$ can be labelled $0$ without losing the homoplasy-free property for $\alpha$.

 {\em Root-Network} determines which edges cannot contain the root by processing each character from $\mathcal{C}_0$ in turn.
All edges are initially colored green, and any edge that is discovered to not be able to contain the root for some character is colored red. 
Under the assumptions of the algorithm, at the end of the algorithm there will be at least one edge that is not colored red.
The  set of edges that are green constitutes the set of edges that can contain the root, and will be returned by the algorithm.




\paragraph{Handling cut edges. }
An edge whose deletion splits the network into two components is
referred to as a ``cut edge."
If $e$ is a cut-edge in the network, then it is easy to tell if it should be red or green. 
Removing a cut edge $e$ splits the leafset into two sets, $A$ and $B$.
If any character exhibits state $1$ on leaves in both $A$ and $B$, then $e$ must be colored red, and otherwise it remains green.
We note that it is not possible for  both $0$ and $1$ to appear on both sides of $e$,  since that is inconsistent with homoplasy-free evolution.

\paragraph{Processing edges in  cycles}
All edges that are not cut edges are in cycles, and because we are working with a level-1 network, any such edge is in exactly one cycle.
Here we show how to color the edges that are in cycles.

Let $\gamma$ be a cycle in $N$, and assume it has $k$ vertices.  If we were to remove all the edges in the cycle, the network would split into exactly $k$ components, since all cycles in $N$ are vertex-disjoint.

Consider a single character in $\mathcal{C}_0$ and the states of this character at the leaves in each of the components defined for $\gamma$.
We split  the components into three sets: the set $A(0)$ of components all of whose leaves have state $0$, the set $A(1)$ of components all of whose leaves have state $1$, and the set $A(0,1)$  the set of components where at least one leaf has state $0$ and at least one leaf has state $1$.
Each vertex in $\gamma$ belongs to exactly one component, and so we can 
label the vertices of $\gamma$. according to the  type of component they belong to
(i.e., $A(0), A(1)$, or $A(0,1)$).
\textcolor{black}{We note that $\gamma$ has at most one vertex labelled $A(0,1)$, as otherwise the character cannot evolve without homoplasy.
We use this to determine if we should recolor the edges in $\gamma$   as follows:
\begin{itemize}
\item If there is one vertex in $\gamma$ labelled $A(0,1)$, then we color red any edge incident with a vertex labelled $A(1)$.
\item If there are no vertices in $\gamma$ labelled $A(0,1)$, then we color red any edge both of whose endpoints are labelled $A(1)$.
\end{itemize}
}




We perform this processing  for  every character, thus recoloring some edges in $\gamma$ red. 
Any edge that remains green throughout this process is returned by {\em Root-Network}.



\begin{theorem}
Let $N$ be the true unrooted level-1 network and
let $\mathcal{C}_0$ denote the set of homoplasy-free phonological characters that exhibit both $0$ and $1$ at the leaves of $N$.
Rooting $N$ on any edge returned by {\em Root-Network} will produce a rooted network on which all  characters in $\mathcal{C}_0$ can evolve without homoplasy, and the  edge containing the true location of the root will be in the output returned by {\em Root-Network}.
Furthermore, when given the unrooted topology of the true phylogenetic network as input, {\em Root-Network} is a statistically consistent estimator of the root location under the assumption that the probability of homoplasy-free phonological characters is positive.
\label{thm:RootNetwork-consistent}
\end{theorem}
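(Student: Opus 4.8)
The plan is to establish three separate claims: (i) \emph{soundness} — any edge returned by {\em Root-Network} yields a rooted network on which every character in $\mathcal{C}_0$ is homoplasy-free; (ii) \emph{completeness} — the edge containing the true root is never colored red, hence is always returned; and (iii) \emph{statistical consistency} — as the number of characters grows, with probability tending to $1$ the only edge(s) returned are those consistent with the true root location. I would treat the cut-edge case and the cycle-edge case separately throughout, since the coloring rules differ.

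For soundness, I would fix an edge $e$ that survives all processing, subdivide it with $v_e$, root there, and argue character-by-character that the resulting rooted tree (within the rooted network, each character still evolves down one embedded tree) admits a homoplasy-free labeling with $v_e$ labeled $0$. The key local fact, which I would isolate as the crux, is this: for a character $\alpha\in\mathcal{C}_0$ and a cycle $\gamma$, homoplasy-free evolution forces the state-$1$ region to be ``connected'' in the appropriate sense — in a tree, the set of vertices labeled $1$ under any homoplasy-free labeling is a subtree, so at most one component of $\gamma$ can be of type $A(0,1)$, and the type-$A(1)$ components together with any $A(0,1)$ component must form a contiguous arc of $\gamma$. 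If $e$ is green, then either $e$ is a cut edge with state $1$ confined to one side (so routing the root through $e$, i.e.\ declaring $v_e$ state $0$, is consistent), or $e$ lies in a cycle and the coloring rule guarantees $e$ is not interior to — and not adjacent to — the ``$1$-arc'', so again $v_e$ can consistently be given state $0$. I would package this as: \emph{green survival of $e$ $\Rightarrow$ for every $\alpha$, there is a homoplasy-free extension with $v_e\mapsto 0$}, then note that a simultaneous-over-all-$\alpha$ labeling exists because the extensions are chosen independently per character (each character evolves down its own tree, and homoplasy-freeness is a per-character condition).

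For completeness, I would argue the contrapositive: the edge $e^\star$ containing the true root cannot be colored red by any character $\alpha\in\mathcal{C}_0$. Since $\alpha$ truly evolves homoplasy-free down some tree rooted within $e^\star$, the true root carries state $0$; I would then check that in each coloring rule, an edge that genuinely lies ``above'' all state-$1$ vertices (in the rooted sense) fails the red-coloring predicate — e.g.\ in a cycle with an $A(0,1)$ vertex, $e^\star$ cannot be incident to an $A(1)$ vertex because the true root lies strictly on the $0$-side of the entire $1$-subtree, and similarly for the no-$A(0,1)$ case and the cut-edge case. This is essentially the converse bookkeeping to claim (i).

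For consistency, I would show that any ``wrong'' green edge is eventually colored red with probability $1$. Fix an edge $e$ not consistent with the true root; rooting there forces some ancestral-state violation, and I claim there is a positive-probability class of homoplasy-free phonological characters whose state-$1$ pattern at the leaves witnesses this — concretely, a homoplasy-free phonological character whose single $0\to 1$ transition occurs on an edge positioned so that, relative to $e$, state $1$ appears on ``both sides'' (cut-edge case) or on an arc forcing the red rule to fire (cycle case). The hypothesis that homoplasy-free phonological characters have positive probability, combined with the model's positive transition probabilities, guarantees each such witnessing character appears with probability bounded below by a fixed constant, so by Borel–Cantelli (independence of characters) it appears infinitely often, and $e$ gets colored red almost surely; a union bound over the finitely many edges finishes it. I expect the main obstacle to be the combinatorial case analysis inside cycles showing that the two coloring rules are \emph{exactly} right — neither over-coloring (which would break completeness) nor under-coloring (which would break consistency) — and in particular handling the interaction at the bottom nodes of a cycle, where indegree-two vertices make the ``$1$-region is a subtree'' intuition subtler; I would lean on Lemma \ref{lemma:main} and the vertex-disjointness of cycles to keep this tractable.
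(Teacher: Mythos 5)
Your proposal is correct and follows essentially the same route as the paper: the same soundness/completeness decomposition for the first claim, and for consistency the same key device of homoplasy-free phonological characters whose single $0\to 1$ transition occurs on a prescribed edge, each arising with positive probability. The only difference is bookkeeping in the final step --- you color each wrong edge red with its own witnessing character and finish with a union bound over the finitely many edges, whereas the paper observes that just two such characters (one changing on each of the two edges incident to the true root) already suffice to color every non-root edge red; both versions go through, and your cycle-case and connected-$1$-region analysis is if anything more detailed than the paper's sketch.
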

The proof for this theorem is in the appendix. 
As a corollary, we have:

\begin{corollary}
The two-stage method of {\em QBTE} followed by {\em Root-Network}  is statistically consistent for estimating the rooted topology of the network $N$ under the WERN 2023 model, when the rooted network  $N$ is a level-1 network where all cycles have length at least six 
and the probability of homoplasy-free phonological characters is positive. Furthermore, this two-stage method runs in polynomial
time.
\end{corollary}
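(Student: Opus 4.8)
The plan is to obtain the corollary purely by composing the two consistency guarantees already in hand, Theorem~\ref{thm:qbte-consistent} for \emph{QBTE} and Theorem~\ref{thm:RootNetwork-consistent} for \emph{Root-Network}, together with the observation that concatenating two polynomial-time procedures yields a polynomial-time procedure. The first thing I would do is note that the hypotheses line up exactly: \emph{QBTE} needs $N$ to be level-1 with every cycle of length at least six, which is assumed; and \emph{Root-Network} needs $N$ to be level-1 and needs the probability of a homoplasy-free phonological character to be positive, both of which are assumed.

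Next I would set up the relevant events on the common probability space generated by the $M$ characters. Let $E_1$ be the event that \emph{QBTE} returns the true unrooted topology of $N$; by Theorem~\ref{thm:qbte-consistent}, $\Pr(E_1)\to 1$ as $M\to\infty$. Let $E_2$ be the event that \emph{Root-Network}, when given the true unrooted topology of $N$ and the set $\mathcal{C}_0$ of homoplasy-free phonological characters appearing in the data, returns the edge containing the true root. Since the probability that a random character is a homoplasy-free phonological character exhibiting both states is positive, $|\mathcal{C}_0|\to\infty$ almost surely, so Theorem~\ref{thm:RootNetwork-consistent} gives $\Pr(E_2)\to 1$. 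The key point is that on $E_1$ the output of the first stage, which is the input of the second stage, is precisely the true unrooted topology; hence on $E_1$ the second stage of the pipeline behaves exactly as analyzed in the definition of $E_2$. Therefore the two-stage method returns the correct rooted topology on $E_1\cap E_2$, and by the union bound $\Pr(E_1\cap E_2)\ge 1-\Pr(E_1^c)-\Pr(E_2^c)\to 1$. This establishes statistical consistency.

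For the running time, \emph{QBTE} runs in polynomial time by Theorem~\ref{thm:qbte-consistent}, and \emph{Root-Network} runs in polynomial time as well: this is immediate from its description, since classifying the cut edges is a single pass over the characters, and each cycle $\gamma$ of length $k$ is handled by examining, for each character, the $O(k)$ components that removing the cycle edges induces (using that the cycles are vertex-disjoint). The pipeline is the composition of these two, hence polynomial time.

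I expect the only subtlety requiring care is the coupling in the middle paragraph: one must be explicit that Theorem~\ref{thm:RootNetwork-consistent}'s guarantee is conditional on \emph{Root-Network} receiving the correct unrooted topology, that this conditioning is automatically supplied by restricting attention to the event $E_1$, and that the two stages consuming the same character data causes no difficulty because the argument uses only a union bound over the two failure events, not any independence. Everything else is bookkeeping.
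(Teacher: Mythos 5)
Your proposal is correct and takes the same route as the paper, which simply cites Theorems~\ref{thm:qbte-consistent} and~\ref{thm:RootNetwork-consistent} and notes the conclusion follows easily; your explicit union-bound coupling of the two failure events is just a more careful write-up of that same composition.
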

The proof follows easily from Theorems \ref{thm:qbte-consistent} and
\ref{thm:RootNetwork-consistent}.

\section{Practical considerations}
\label{sec:practical}

We have described (1) {\em QBTE}, a method for constructing the unrooted topology of a level-1 phylogenetic network from characters, and (2) {\em Root-Network}, a method for rooting the resultant topology of the level-1 network. 
Each of these methods has strong theoretical guarantees of statistical consistency.
However, these guarantees do not imply good or even reasonable accuracy on finite data, such as can occur when the input is of insufficient quantity or does not evolve under the assumptions of the theorems (e.g., down a level-1 network with known homoplastic states).


Therefore, we ask: {\em what are the consequences for estimating  the network from real-world languages, given these caveats?}
It is important to realize that the guarantees for the {\em QBTE} algorithm depend on  {\em  QTC} correctly returning the entire set of quartet trees $Q(N_r)$.
Moreover, {\em QBTE} also requires that the characters evolve down a level-1 network and that every cycle has at least six nodes. 
Even if the assumptions of the character evolution are valid, so that the characters evolve down a level-1 phylogenetic network under the WERN 2023 model,  
some of the quartet trees in $Q(N_r)$ may fail to appear in the output from {\em QTC}, which will violate the requirements for {\em QBTE} to return a network.
Furthermore, if the assumptions regarding character evolution are invalid, then some of the quartet trees produced by {\em QTC} may be incorrect (e.g., they may be quartet trees not displayed in the phylogenetic network). 
Finally, it may be that the characters evolve down a  phylogenetic network that is more complex than a level-1 network.
In each of these cases, the most likely outcome is
that {\em QBTE} will fail to return anything.

Given the likely limitations of all three methods,
we consider an alternative approach. 
Instead of estimating the unrooted network topology directly, we propose to estimate the unrooted genetic tree first using quartet trees, then (if desired) root the genetic tree and add in the contact edges.   
For example, such an approach was used in \cite{nakhleh2005perfect} to produce a perfect phylogenetic network for Indo-European.

\vspace{.1in}




\noindent
{\bf Genetic Tree Estimation (heuristic)}:
\begin{itemize}
    \item  Step 1: Construct a set $Q$ of  quartet trees using the {\em QTC} technique.
    \item Step 2: Build a tree $T$ for $\mathcal{L}$   from  $Q$, using quartet amalgamation methods that construct trees on the full leafset from sets of estimated quartet trees; examples include ASTRAL \cite{mirarab2014astral}, Quartets MaxCut \cite{QMC}, and Quartet FM \cite{reaz2014accurate}, which do not require that all the quartet trees be correct, nor that the set contain a quartet tree for every four-leaf subset of the leafset.
\end{itemize}
Note that quartet amalgamation methods typically try to solve the {\em  Maximum Quartet Support Supertree} problem, where the output is a tree that agrees with as many quartet trees in the input as possible.
Because these quartet amalgamation methods will return output trees even under adverse conditions (e.g., where many quartet trees have errors), this type of approach  is guaranteed to return {\bf a} tree $T$ provided that the set $Q$ of quartet trees produced by {\em QTC} contains quartets that cover the leafset.
This condition is   much easier to achieve than what is required for our level-1 network  estimation method, {\em QBTE}.
Moreover, when the quartet amalgamation method uses polynomial time (which is true of many such methods), this approach uses polynomial time. Hence there are several empirical advantages to this  approach over {\em QBTE}.

\section{Future Work}
\label{sec:future}
This study suggests several directions for future work. 
For example,  we recognized practical limitations of {\em QBTE}, our proposed method for estimating the unrooted phylogenetic network topology: although it is  provably statistically consistent under the WERN 2023 model, assuming that the phylogenetic network is level-1, in practice it may fail to return
any network topology for a given input. 
Hence, it has limited practical use for analyzing real world data.
Therefore, the most important future work is to determine whether there are methods that are provably statistically consistent for estimating the topologies of these tree-based phylogenetic networks that are also of practical benefit. 
The approach we suggested of estimating the genetic tree first is worthwhile, but we do not yet have any proofs of statistical consistency for that estimation using quartet amalgamation methods.


Another technique that might lead to phylogenetic network estimation methods that are of practical benefit  would seek to modify the algorithms used for {\em QBTE} so that they were guaranteed to return network topologies even when the conditions for exact accuracy did not apply.
Such extensions could potentially be implemented by seeking level-1 network topologies that agreed with the maximum number of input quartet trees.

Finally, another direction for future work is to determine whether more complex graphical models (e.g., level-2 phylogenetic networks) are identifiable under the WERN 2023 model, and whether level-1   phylogenetic networks are identifiable under character evolution models that are more complex than the WERN 2023 model. 
Future work is needed to explore these different possibilities.


\section*{Acknowledgments}
\textcolor{black}{
The authors thank C\'ecile An\'e for pointing out that $Q(N_r) \neq Q(N)$ and hence our earlier version (which assumed we could use Gambette's algorithm in \cite{gambette2012quartets}) was flawed.}

\section*{Appendix}

We restate and then sketch proofs for  Theorems 1--3.\\

\noindent
{\bf Theorem 1.}
{\em 
Let $N$ be a rooted phylogenetic network, and let characters evolve down $N$ under the WERN 2023 model, and let $Q$ be the output of {\em QTC}.
Then very quartet tree in $Q$ will be in $Q(N_r)$.
Furthermore, 
 as the number of characters increase, with probability converging to $1$, every quartet tree in $Q(N_r)$ will appear in   $Q$.
Thus, {\em QTC} is a statistically consistent estimator of $Q(N_r)$.}
\begin{proof}
We begin by showing that every quartet tree placed in $Q$ is also in $Q(N_r)$. 
Recall that quartet tree $uv|xy$ is included in $Q$ if and only if some character $\alpha$ is found such that
$\alpha(u)=\alpha(v) \neq \alpha(x)=\alpha(y)$ and the states $\alpha(u),\alpha(x)$ are non-homoplastic. This character evolves down some tree $T$ contained inside the network. 
Moreover, 
since the states exhibited at $u,v,x,y$ are non-homoplastic, 
there is a path in $T$ connecting $u$ and $v$ and another path connecting $x$ and $y$ and these two paths do not share any vertices.
Hence, the quartet tree $uv|xy$ is  in $Q(N_r)$.

We now show that in the limit, every quartet tree in $Q(N_r)$ is also in $Q$.
Let $ab|cd$ be a quartet tree in $Q(N_r)$. Hence, there is a  rooted
tree $T$ contained in $N$ that induces this quartet tree (when $T$ is considered as an unrooted tree).
With positive probability, a character will evolve down $T$.
Without loss of generality, assume $a$ and $b$ are siblings in the rooted version of $T$, so that their least common ancestor, $lca_T(a,b)$, lies strictly below the root of the tree $T$.

Since $a$ and $b$ are siblings, 
there is an
edge  $e$  above $lca_{T}(a,b)$ within $T$. 
It follows that the probability that a random character  evolves down $T$, selecting a non-homoplastic state at the root, and then changing on $e$ but on no other edge in $T$, is strictly positive.
Note that for any such characters $\alpha$, we have $\alpha(a)=\alpha(b)$ and $ \alpha(c)=\alpha(d)$ where $\alpha(a)$ and $\alpha(b)$ are different and both are non-homoplastic states. 
In such a case, 
$Q$ will include quartet tree $ab|cd$.
Thus, in the limit as the number of characters increases, with probability converging to $1$, $Q$ will contain every quartet tree in $Q(N_r)$.

Since in the limit $Q \subseteq Q(N_r)$ and $Q(N_r) \subseteq Q$, it follows that $Q = Q(N_r)$ with probability converging to $1$.
\end{proof}

\noindent {\bf Theorem 2. }
{\em 
The {\em QBTE (Quartet-based topology estimation)} method is statistically consistent for estimating the unrooted topology of the network $N$ under the WERN 2023 model when the rooted network  $N$ is a level-1 network where all cycles have length at least six; furthermore, {\em QBTE} runs in polynomial
time.}

\begin{proof}
By Theorem \ref{thm:QTC-consistent}, we have shown that as the number of characters increases,  we can construct $Q(N_r)$.
\textcolor{black}{
The algorithm we provide, QBTE, has the same two-step approach as the algorithm in \cite{gambette2012quartets}, 
differing only in the second step which expands polytomies in the SN-tree into cycles.  
In \cite{gambette2012quartets}, they showed that when their input was all of $Q(N)$ they would correctly reconstruct the level-1 network $N$.
However, in our study, we only have $Q(N_r)$ available, and $Q(N_r)$ is a proper subset of $Q(N)$.
We therefore modified Gambette's algorithm, retaining the first step but modifying the second step. 
We proved in Section \ref{sec:modified-gambette} that our modification to Gambette's algorithm correctly replaces polytomies by cycles under the conditions given (i.e., where $N$ is a level-1 network, all cycles have length at least six, and $Q(N_r)$ is given as input). }
Since a tree-based network in which no two cycles share any nodes is a level-1 network,
it follows that {\em QBTE} is statistically consistent.
\textcolor{black}{
Moreover, it is trivial that {\em QBTE} runs in polynomial time, since Gambette's algorithm \cite{gambette2012quartets} is polynomial time, and the modification to Gambette's algorithm only changes the second step and the replacement is polynomial time.}
\end{proof}

\noindent{\bf Theorem 3.}
{\em 
Let $N$ be the true unrooted level-1 network and
let $\mathcal{C}_0$ denote the set of homoplasy-free phonological characters that exhibit both $0$ and $1$ at the leaves of $N$.
Rooting $N$ on any edge returned by {\em Root-Network} will produce a rooted network on which all  characters in $\mathcal{C}_0$ can evolve without homoplasy, and the  edge containing the true location of the root will be in the output returned by {\em Root-Network}.
Furthermore, when given the unrooted topology of the true phylogenetic network as input, {\em Root-Network} is a statistically consistent estimator of the root location under the assumption that the probability of homoplasy-free phonological characters is positive.}

\begin{proof}
We sketch the proof. 
It is straightforward to verify that an edge is colored red for a character $\alpha$ if and only if subdividing the edge and labelling the introduced  node by $0$ for $\alpha$ makes $\alpha$ homoplastic on every tree contained within the network. 
Furthermore, it is not hard to see that if we root the network on any edge that remains green throughout Root-Network, then all characters  in $\mathcal{C}_0$ will be homoplasy-free.
As a result,   the first part of the theorem is established.

For the second part of the theorem, if the probability of homoplasy-free phonological characters is positive, then with probability converging to $1$, for every edge in the true network, there is a character $\alpha$ that changes on the edge but on no other edge; hence, $\alpha$ will be non-constant and homoplasy-free. 
Let $e_1$ and $e_2$ be the two edges incident to the root, and suppose the 
 input set of characters contains $\alpha_1$ and $\alpha_2$  homoplasy-free characters that change on $e_1$ and $e_2$, respectively, then these two characters will mark as red every edge below $e_1$ and $e_2$. 
In the unrooted topology for $N$, the root is suppressed and edges $e_1$ and $e_2$ are merged into the same single edge, $e$. 
Hence, when {\em Root-Network} is applied to the unrooted topology for $N$, if characters $\alpha_1$ and $\alpha_2$ are in the input, then the only edge that is not colored red will be the edge $e$ containing the suppressed root.
In conclusion, since the probability of homoplasy-free phonological characters is strictly positive, as the number of such characters increase, the probability that every edge other than the root edge will be red will converge to $1$. 
Thus, {\em Root-Network} will uniquely leave the single edge containing the suppressed root green, establishing that it is statistically consistent for locating the root in the network.
\end{proof}

\bibliography{festschrift}
\end{document}